\newcommand{\eq}[1]{Eq.~\hyperref[eq:#1]{(\ref*{eq:#1})}}
\newcommand{\eqbeg}[1]{Equation~\hyperref[eq:#1]{(\ref*{eq:#1})}}
\newcommand{\eqs}[2]{Eqs.~\hyperref[eq:#1]{(\ref*{eq:#1})} and \hyperref[eq:#2]{(\ref*{eq:#2})}}
\renewcommand{\sec}[1]{\hyperref[sec:#1]{Section~\ref*{sec:#1}}}
\newcommand{\fig}[1]{\hyperref[fig:#1]{Fig.~\ref*{fig:#1}}}
\newcommand{\sfig}[2]{\hyperref[fig:#1]{Fig.~\ref*{fig:#1}#2}}
\newcommand{\thm}[1]{\hyperref[thm:#1]{Theorem~\ref*{thm:#1}}}
\newcommand{\lem}[1]{\hyperref[lem:#1]{Lemma~\ref*{lem:#1}}}
\newtheorem{theorem}{Theorem}
\newtheorem{lemma}{Lemma}
\begin{document}
\title{BosonSampling with Lost Photons}

\author[1]{Scott Aaronson\thanks{email: aaronson@csail.mit.edu. \ Supported by an
Alan T.\ Waterman Award from the National Science Foundation, under grant
no.\ 1249349.}}
\author[2]{Daniel J. Brod\thanks{email: dbrod@perimeterinstitute.ca. \ Research at Perimeter Institute
is supported by the Government of Canada through Industry Canada and by the Province of Ontario
through the Ministry of Research and Innovation.}}
\affil[1]{Massachusetts Institute of Technology, Cambridge, MA, USA}
\affil[2]{Perimeter Institute for Theoretical Physics, Waterloo, ON, Canada}

\date{}

\maketitle

\begin{abstract}
BosonSampling is an intermediate model of quantum computation where linear-optical networks are used to solve sampling problems expected to be hard for classical computers. \ Since these devices are not expected to be universal for quantum computation, it remains an open question of whether any error-correction techniques can be applied to them, and thus it is important to investigate how robust the model is under natural experimental imperfections, such as losses and imperfect control of parameters. \ 
Here we investigate the complexity of BosonSampling under photon losses---more specifically, the case where an unknown subset of the photons are randomly lost at the sources. \ We show that, if $k$ out of $n$ photons are lost, then we cannot sample classically from a distribution that is $1/n^{\Theta(k)}$-close (in total variation distance) to the ideal distribution, unless a $\text{BPP}^{\text{NP}}$ machine can estimate the permanents of Gaussian matrices in $n^{O(k)}$ time. \ In particular, if $k$ is constant, this implies that simulating lossy BosonSampling is hard for a classical computer, under exactly the same complexity assumption used for the original lossless case.
\end{abstract}

\section{Introduction} \label{sec:intro}

\textsc{BosonSampling} is a computational problem that involves sampling from a certain kind of probability distribution, defined in terms of the permanents of matrices. \ While it does not have any obvious applications, the \textsc{BosonSampling} problem has two features of interest: (i) it is ``easy'' to solve by a restricted, single-purpose quantum computer composed only of linear-optical elements, and (ii) it is hard to solve on a classical computer \cite{Aaronson2013a}, under plausible complexity-theoretic assumptions. \ Its natural implementation as a linear-optical experiment led, since the initial proposal, to a flurry of experiments of increasing complexity \cite{Broome2013, Crespi2013b, Tillmann2013, Spring2013, Spagnolo2013c, Carolan2014, Bentivegna2015, Carolan2015}.

The fact that these single-purpose linear-optical devices are not expected to be universal for quantum computing also means that they are not likely to have access to the usual tools of fault-tolerance and error correction, and thus it is important to investigate the resilience of the underlying models to real-world experimental imperfections. \ The original proposal of \cite{Aaronson2013a} already focused on this issue, giving evidence that \textsc{BosonSampling} is hard even if the classical computer is allowed to sample from a distribution that only approximates the ideal one (in variation distance). \ However, this does not suffice to address all realistic sources of error and imperfections that one may encounter. \ Thus, since then, further work has been done to investigate the robustness of \textsc{BosonSampling} devices to errors such as losses and mode mismatch \cite{Rohde2012,Shchesnovich2014}, imperfections on the linear-optical network \cite{Leverrier2015, Arkhipov2014}, and other sources of noise \cite{Kalai2014}.

There are a few reasons why photon losses are a central scalability issue in \textsc{BosonSampling} devices. \ The first is that they are pervasive in linear-optical experiments---to give an example, most recent \textsc{BosonSampling} experiments were performed using integrated photonic devices, where rather than setting up a linear-optical circuit using beam splitters in free space, the photons propagate along waveguides etched in a small chip. \ The most realistic models of loss in these devices give a per-photon probability of arrival that decreases exponentially with the depth of the circuit, and this is even before taking into account source and detector inefficiencies. \ The second reason is that often losses are dealt with using post-selection. \ By only accepting the experimental runs where the correct number of photons is observed, one can argue that the losses manifest only as an attenuation of the number of observed events per unit time. \ The problem with this approach, of course, is that even if each photon has only a constant probability of being lost, this already leads to an exponential overhead in experimental time as the desired number of photons increases.

In this paper, we give a formal treatment of the problem of \textsc{BosonSampling} with a few lost photons. \ More specifically, we show how the original argument of \cite{Aaronson2013a} can be modified to allow for the loss of a constant number (i.e.\ $O(1)$) of photons, thus showing that \textsc{BosonSampling} in this regime has the same complexity-theoretic status as in the ideal case. \ Of course, this falls short of the result one would want, since in practice one expects that at least a constant fraction (i.e.\ $\Theta(n)$) of the photons will be lost. \ Nevertheless, the result here constitutes a nontrivial step toward understanding experimentally-realistic \textsc{BosonSampling} regimes.

Let us point out in particular that, if more than $k=O(1)$ photons are lost, then we still get {\em some} hardness result: the issue is ``just'' that the strength of the result degrades rapidly as a function of $k$. \ More concretely, if $k$ out of $n$ photons are lost, then our argument will imply that \textsc{BosonSampling} cannot be done in classical polynomial time, to within a $1/n^{\Theta(k)}$ error in total variation distance, assuming that permanents of Gaussian matrices cannot be estimated in $n^{O(k)}$ time. \ So for example, even when a (sufficiently small) constant fraction of the photons are lost, $k=\epsilon n$, we can conclude that {\em exact} \textsc{BosonSampling} cannot be done in classical polynomial time, assuming that Gaussian permanent estimation is exponentially hard.

We also point out, in the appendix, how the proof of our main result can be repurposed to other physically-interesting error models, such as \textsc{BosonSampling} with dark counts, or a combination of both losses and dark counts.

\textbf{Notation}: Throughout this paper, we denote by $\mathcal{N}(0,1)_\mathbb{C}$ the complex Gaussian distribution with mean 0 and variance 1, and by $\mathcal{N}(0,1)^{m \times n}_\mathbb{C}$ the distribution over $m \times n$ matrices of i.i.d.\ complex Gaussian entries. \ For two probability distributions $\mathcal{D}_1 = \{p_x\}_x$ and $\mathcal{D}_2 = \{q_x\}_x$, the total variation distance $||\mathcal{D}_1 - \mathcal{D}_2||$ and the Kullback-Leibler divergence $D_{KL}(\mathcal{D}_{1}||\mathcal{D}_{2})$ are given, respectively, by:

\begin{align}
||\mathcal{D}_1 - \mathcal{D}_2|| := & \frac{1}{2} \sum_x |p_x - q_x|, \\
D_{KL}(\mathcal{D}_1 || \mathcal{D}_2) := & \sum_x p_x \ln \frac{p_x}{q_x}.
\end{align}

\section{Definition of the Problem} \label{sec:probdef}

Let $U$ be an $m \times m$ Haar-random unitary matrix (i.e.\ the ``interferometer''), and let $\Phi_{m,n}$ be the set of all lists of non-negative integers $S = (s_1, \ldots s_m)$ such that $\sum s_i = n$,  (i.e.\ the ``input/output states"). \ The transition probability from a given input state $S = (s_1, \ldots s_m) \in \Phi_{m,n}$ to an output state $T = (t_1, \ldots t_m) \in \Phi_{m,n}$ is given by
\begin{equation}\label{eq:permanent1}
\text{Pr}[S \rightarrow T] = \frac{|\text{Per}(U_{S,T})|^2}{s_1! \ldots s_m! t_1! \ldots t_m!},
\end{equation}
where $U_{S,T}$ is a submatrix of $U$ constructed as follows: (i) first, construct a $m \times n$ matrix $U_T$ by taking $t_1$ copies of the first row of $U$, $t_2$ copies of the second row, etc; (ii) then, form $U_{S,T}$ by taking $s_1$ times the first column of $U_T$, $s_2$ times its second column, etc. \ If we now fix some input state $S$ (which, for simplicity, we will take to be a string of $n$ ones followed by $m-n$ zeroes), \eq{permanent1} defines a distribution $\mathcal{D}_U$ given by
\begin{equation}\label{eq:permanent2}
\underset{\mathcal{D}_U}{\text{Pr}}[T] = \frac{|\text{Per}(U_{S,T})|^2}{t_1! \ldots t_m!}.
\end{equation}
The original \textsc{BosonSampling} problem then can simply be defined as producing a sample from $\mathcal{D}_U$, or at least from some other distribution that is close to it in total variation distance. \ It can also be shown that, if $m \gg n^2$ and $U$ is Haar-random, then one only needs to consider ``no-collision outputs,'' where each $t_i$ is only 0 or 1, since these dominate the distribution. \ For simplicity we will restrict our attention to these states from now on, replacing $\Phi_{m,n}$ by $\Lambda_{m,n}$ (defined as the subset of $\Phi_{m,n}$ with only no-collision states) and dropping the denominator of \eq{permanent2}, but for a full discussion see \cite{Aaronson2013a} or Sec.\ 4 of \cite{Aaronson2014}.

In this paper, we will consider a modified version of the \textsc{BosonSampling} problem, where a subset of the photons are lost along the way. \ Before giving the formal definition, consider the following illustrative example. \ Suppose we input one photon in each of the first $n+1$ modes of $U$ (i.e.\ $S \in \Lambda_{m,n+1}$ is a string of $n+1$ ones followed by all zeroes), but we observe a state with only $n$ photons at the output (i.e.\ $T \in \Lambda_{m,n}$). \ What probability should we ascribe to this event? Since all photons are identical and it is impossible to know which one was lost, the probability of this event is just the average of the probabilities of $n+1$ different \textsc{BosonSampling} experiments that use only $n$ out of the $n+1$ initial photons. \ In other words, we can write this as
\begin{equation}
\text{Pr}[T] = \frac{1}{n+1} \sum_i |\text{Per}(U_{S_i,T})|^2,
\end{equation}
where $U_{S_i,T}$ is obtained by deleting the $i$th column of $U_{S,T}$. \ Note that, by the way we defined $U_{S,T}$ it is, in this setting, an $n \times (n+1)$ matrix, so after deleting one of its columns we obtain a square matrix and the permanent function is well-defined.

We can now easily generalize the above to the case where exactly $k$ out of $n+k$ photons were lost. \ In this case, the probability of outcome $T \in \Lambda_{m,n}$ is given by
\begin{equation}
\text{Pr}[T] = \frac{1}{|\Lambda_{n+k,n}|} \sum_{S \in \bar{\Lambda}_{m,n+k,n}} |\text{Per}(U_{S,T})|^2,
\end{equation}
where now $\bar{\Lambda}_{m,k,n}$ denotes the set of all possible no-collision states of $n$ photons in the first $n+k$ out of $m$ modes. \ It is easy to see that $|\bar{\Lambda}_{m,n+k,n}| = |\Lambda_{n+k,n}| =\binom{n+k}{n}$.

Two remarks are in order about this particular choice of loss model. \ The first is that we are supposing that exactly $k$ photons have been lost in the experiment, but the more physically realistic model is one where each photon has an independent probability $p$ of being lost. \ These models may seem very different at first glance, especially since the latter is described by a state which does not even have a fixed photon number. \ Nevertheless, we can connect them as follows: if we start with $N$ total photons, and each has probability $p$ of being lost, then the average number of lost photons is $k = p N$. \ Furthermore, we need only repeat the experiment $O(\sqrt{N})$ times to have a high probability of observing an outcome with this exact value of $k$. \ This shows that the model considered here can be simulated (with only polynomial overhead) by the realistic one if we set $p = k/N$---and hence, that a hardness result for our model carries over to the realistic one, which is what we need.

The second remark is that, to simplify the analysis, we are assuming that the photons are lost only at the input to the interferometer (e.g.\ in the sources). \ In reality, however, we should expect them to be lost inside the circuit or at the detectors as well, in which case the above equation for the probability would be different. \ We will return to this issue in Section \ref{SUMMARY}.

Given our choice of loss model, we can view the modified problem more abstractly as \textsc{BosonSampling} with the $|\text{Per}(X)|^2$ function replaced by the following function:
\begin{equation}\label{eq:phi}
\Phi(A) := \frac{1}{|\Lambda_{n+k,n}|} \sum_S |\text{Per}(A_S)|^2,
\end{equation}
where $\Phi(A)$ is a function defined on $n \times (n+k)$ matrices, and the sum is taken over all square $n \times n$ proper submatrices of $A$.\footnote{Isaac Chuang has suggested to us that $\Phi(A)$ be called the ``temperament,'' perhaps because it generalizes the mod-squared permanent to more temperamental experimental apparatus.}

The main questions we address in this paper, then, are: is this generalized version of \textsc{BosonSampling}, where probabilities are given by objects like $\Phi(A)$, as hard to simulate approximately as the non-lossy case where probabilities are given by $|\text{Per}(X)|^2$? \ For what values of $k$?

We will show that the first question has an affirmative answer, but alas, we are only able to show a strong hardness result in the case where $k$ is a constant (i.e.\ does not grow with $n$). \ We leave, as our main open problem, to give a fuller understanding of what happens in the realistic case that $k$ grows with $n$.

Let us first give a (very) brief outline of the reasoning behind the result of Aaronson and Arkhipov \cite{Aaronson2013a}. We begin by defining the following problem:

\textbf{Problem 1} ($|$GPE$|^2_\pm$) \cite{Aaronson2013a}. Given as input a $n \times n$ matrix $X \sim \mathcal{N}(0,1)^{n \times n}_\mathbb{C}$ of i.i.d.\ Gaussians, together with error bounds $\epsilon, \delta >0$, estimate $|$Per$(X)|^2$ to within additive error $\pm \epsilon \cdot n!$ with probability at least $1-\delta$ over $X$ in poly($n,1/\epsilon,1/\delta$) time.

$|$GPE$|^2_\pm$ is the main problem addressed in \cite{Aaronson2013a}. \ Aaronson and Arkhipov show that, if there is a classical algorithm that efficiently produces a sample from a distribution close in total variation distance to that sampled by a \textsc{BosonSampling} device, then $|$GPE$|^2_\pm \in$ BPP$^{\text{NP}}$. \ They then conjecture that $|$GPE$|^2_\pm$ is $\#$P-hard (by means of two natural conjectures, the Permanent-of-Gaussians Conjecture and the Permanent Anti-Concentration Conjecture; see \cite{Aaronson2013a} for more details). \ Under that conjecture, they conclude that, if there was an efficient classical algorithm $C$ as described above, the polynomial hierarchy would collapse. \ This argument provides evidence that, despite being a very restricted quantum computer, a \textsc{BosonSampling} device is exponentially hard to simulate on a classical computer.

Here, we are going to ``intercept'' the result of Aaronson and Arkhipov halfway through: we will show that $|$GPE$|^2_\pm$ can be reduced to another problem, which is its natural generalization when we replace $|\text{Per}(X)|^2$ by $\Phi(A)$, as long as $k$ is constant. \ More concretely, consider the following problem:

\textbf{Problem 2} ($\Sigma |$GPE$|^2_\pm$). Given as input a $n \times (n+k)$ matrix $A  \sim \mathcal{N}(0,1)^{n \times (n+k)}_\mathbb{C}$, together with error bounds $\epsilon', \delta > 0$, estimate $\Phi(A) := \Sigma_{S} |\text{Per}(A_S)|^2$ to within additive error $\pm \epsilon' n!$ with probability at least $1-\delta$ over $A$ in poly($n,1/\epsilon',1/\delta$) time.

Then our main result is:

\begin{theorem} \label{thm:reduction}
If $\mathcal{O}$ is an oracle that solves $\Sigma |\text{\emph{GPE}}|^2_\pm$ with $\epsilon' = O\left(\frac{\epsilon \delta^{k+1/2} k^{k/2}}{n^{k/2}(n+k)^{k}}\right)$, then $|\text{\emph{GPE}}|^2_\pm$ can be solved in \emph{BPP}$^\mathcal{O}$.
\end{theorem}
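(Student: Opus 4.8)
The plan is to exhibit $|\mathrm{Per}(X)|^{2}$ as the value at $0$ of a degree-$k$ polynomial whose values near $1$ the oracle can supply, and then recover it by interpolation; the only randomness of the $\mathrm{BPP}^{\mathcal O}$ reduction is the choice of padding. Given an instance $X\sim\mathcal N(0,1)^{n\times n}_{\mathbb C}$ of $|\mathrm{GPE}|^{2}_{\pm}$, I would draw a fresh $G\sim\mathcal N(0,1)^{n\times k}_{\mathbb C}$ and, for a real parameter $t$, form the $n\times(n+k)$ matrix $A(t):=[\,X\mid tG\,]$ (the columns of $X$ followed by the columns of $G$ each scaled by $t$). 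A choice of $n$ columns that uses exactly $j$ of the scaled columns has permanent equal to $t^{j}$ times that of the corresponding unscaled submatrix, so
\begin{equation*}
F(t)\;:=\;\binom{n+k}{n}\,\Phi\bigl(A(t)\bigr)\;=\;\sum_{S}\bigl|\mathrm{Per}\bigl(A(t)_{S}\bigr)\bigr|^{2}\;=\;\sum_{j=0}^{k}c_{j}\,t^{2j},
\end{equation*}
where $c_{j}$ is the sum of the squared permanents of the $n\times n$ submatrices of $[\,X\mid G\,]$ that select exactly $j$ columns of $G$; in particular $c_{0}=|\mathrm{Per}(X)|^{2}$. Thus $F$, viewed as a polynomial of degree $k$ in $s=t^{2}$, satisfies $F|_{s=0}=|\mathrm{Per}(X)|^{2}$, and a single oracle call on $A(t)$ returns $F(t)$ up to the known factor $\binom{n+k}{n}$ and additive error $\binom{n+k}{n}\epsilon' n!$. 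So it suffices to evaluate $F$ at $k+1$ values of $s$ and Lagrange-interpolate to $s=0$.

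The point of scaling only the padding columns is that $A(t)$ is distributed exactly as $\mathcal N(0,1)^{n\times(n+k)}_{\mathbb C}$ when $t=1$, and — since the perturbation only changes the variance of $k$ columns by $t^{2}-1$ — the law of $A(t)$ stays within total variation $O(\sqrt{nk}\,|t-1|)$ of it for $t$ near $1$. So I would place the $k+1$ interpolation nodes in an interval of radius $\eta=\delta/\mathrm{poly}(n,k)$ about $s=1$, call $\mathcal O$ on the corresponding matrices $A(\sqrt{s_{i}})$, rescale the answers by $\binom{n+k}{n}$, interpolate, and output the value at $s=0$. With the oracle's confidence parameter and $\eta$ both set so that each call fails with probability at most $\delta/(2(k+1))$, a union bound over the $k+1$ calls keeps the total failure probability below $\delta$. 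A careful accounting here — turning the oracle's guarantee ``over a random Gaussian input'' into one about the fixed random instance $X$ pushed through $k+1$ perturbed queries — is part of what forces the fractional power of $\delta$ that appears in the statement.

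Bounding how the oracle error propagates is the heart of the proof and the source of the degradation. The interpolant is evaluated at $s=0$, which lies at distance $\approx 1$ from a cluster of nodes of radius $\eta=1/\mathrm{poly}(n)$, so its Lagrange coefficients can be of order $(\Theta(k)/\eta)^{k}=n^{\Theta(k)}\delta^{-\Theta(k)}$; combined with the $\binom{n+k}{n}$ rescaling and the combinatorial factors picked up when a permanent is expanded along the $j$ scaled columns, an oracle error of $\epsilon' n!$ per call becomes a final error of $\epsilon' n!$ times a factor $n^{\Theta(k)}\delta^{-\Theta(k)}$, and the choice $\epsilon'=O\!\left(\epsilon\,\delta^{k+1/2}k^{k/2}/\bigl(n^{k/2}(n+k)^{k}\bigr)\right)$ is exactly what drives this below $\epsilon n!$. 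The main obstacle is the tension that pins down $\eta$: shrinking $\eta$ is what keeps the oracle's average-case promise applicable to the perturbed matrices $A(\sqrt{s_{i}})$, but it simultaneously worsens the conditioning of the extrapolation to $s=0$, so $\eta$ — and hence the precise dependence of $\epsilon'$ on $n$, $k$, $\delta$ — must be tuned to balance these; and one must check that the $\delta$-fraction of Gaussian inputs on which $\mathcal O$ is allowed to err is not systematically aligned with the slightly-perturbed matrices we actually feed it.
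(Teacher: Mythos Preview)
Your proposal is correct and follows essentially the same route as the paper: embed $X$ as the leftmost block of $A(t)=[X\mid tG]$, expand $\Phi(A(t))$ as a degree-$k$ polynomial in $t^{2}$ with constant term proportional to $|\mathrm{Per}(X)|^{2}$, query the oracle at $k+1$ values of $t$ within $O(\delta/\sqrt{nk})$ of $1$ so that the TV distance to the true Gaussian stays $O(\delta)$, and recover the constant term by interpolation, with the Vandermonde blow-up producing the $n^{\Theta(k)}\delta^{-\Theta(k)}$ degradation in $\epsilon'$.

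The only minor methodological difference is in the error-propagation bookkeeping: the paper casts the interpolation as ordinary least squares, bounds $\mathrm{Var}(\hat\beta_{0})$ via Gautschi's explicit bound on $\|X^{-1}\|_{\infty}$ for the Vandermonde matrix with equispaced nodes, and then invokes Chebyshev with $K=1/\sqrt{\delta}$ (which is where the stray $\delta^{1/2}$ in the statement comes from), whereas you phrase it directly as Lagrange extrapolation with a union bound over the $k+1$ calls. Both lead to the same asymptotic requirement on $\epsilon'$. Your closing remark about the correlated queries---all $k+1$ matrices share the same $X$, so the oracle's ``bad $\delta$-fraction'' could in principle be aligned with them---is a genuine subtlety that the paper leaves implicit; it does not change the argument, but you are right to flag it.
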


Notice that the precision demanded of the oracle $\mathcal{O}$ is exponential in $k$, which is why we can only use \thm{reduction} to make claims about \textsc{BosonSampling} if $k$ is constant. \ However, in this case \thm{reduction} immediately allows us to replace all further claims made by Aaronson and Arkhipov regarding $|$GPE$|^2_\pm$ by their equivalent versions with $\Sigma |$GPE$|^2_\pm$, and thus show that \textsc{BosonSampling} with a few lost photons has the same complexity as the original \textsc{BosonSampling} problem.

\section{Reduction from \texorpdfstring{$|$GPE$|^2_\pm$}{} to \texorpdfstring{$\Sigma |$GPE$|^2_\pm$}{}} \label{sec:reductheorem}

In this section, we prove \thm{reduction}.

The idea behind the proof is as follows. \ Given $X \sim \mathcal{N}(0,1)^{n \times n}_\mathbb{C}$, we can trivially embed $X$ as the leftmost submatrix of another matrix $A  \sim \mathcal{N}(0,1)^{n \times (n+k)}_\mathbb{C}$. \ Now define $A[c]$ to be the matrix obtained by multiplying the $k$ rightmost columns of $A$ by the real number $c$, and let $\mathcal{N}_{A}[c]$ be the resulting distribution over $A[c]$. \ Then it is easy to see that

\begin{equation} \label{eq:phidef}
\Phi(A[c]) := \frac{1}{|\Lambda_{n+k,n}|} |\text{Per}(X)|^2 + |c|^2 Q_1 + |c|^4 Q_2 + .... + |c|^{2k} Q_k,
\end{equation}
where each $Q_i$ is the sum of the absolute squares of the permanents of all submatrices of $A$ that include $i$ out of the $k$ rightmost columns of $A$. \ If $c$ is sufficiently close to $1$, then $\mathcal{N}_{A}[c]$ is close to $\mathcal{N}_{A}[1]$ \big(which coincides with $\mathcal{N}(0,1)^{n \times (n+k)}_\mathbb{C}$\big), and we can use the oracle $\mathcal{O}$ to estimate $\Phi(A[c])$. \ By calling $\mathcal{O}$ for $k+1$ different values of $c$, we can then estimate $|\text{Per}(X)|^2$ using standard polynomial interpolation.

Let us start with the following simple lemma, which relates the total variation distance between $\mathcal{N}_{A}[c]$ and $\mathcal{N}_{A}[1]$ to the distance between $c$ and $1$.

\begin{lemma} \label{lem:cbound}
If $|c-1| \leq \frac{\delta}{\sqrt{n k}}$, then $||\mathcal{N}_{A}[c]-\mathcal{N}_{A}[1]|| = \text{O}(\delta)$.
\end{lemma}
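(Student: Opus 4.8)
The plan is to bound the total variation distance by comparing the two product distributions entry-by-entry. The matrix $A[c]$ differs from $A[1]\sim\mathcal{N}(0,1)^{n\times(n+k)}_\mathbb{C}$ only in the $k$ rightmost columns, each of which has been scaled by $c$. Scaling a standard complex Gaussian entry by $c$ turns it into a complex Gaussian with variance $|c|^2$; so $\mathcal{N}_A[c]$ is a product of $nk$ copies of $\mathcal{N}(0,|c|^2)_\mathbb{C}$ together with $n^2$ copies of $\mathcal{N}(0,1)_\mathbb{C}$, while $\mathcal{N}_A[1]$ is a product of $n(n+k)$ copies of $\mathcal{N}(0,1)_\mathbb{C}$. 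The $n^2$ shared coordinates contribute nothing to the distance, so it remains to compare a product of $nk$ rescaled Gaussians against a product of $nk$ standard ones.

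First I would compute the total variation distance between a single pair: $\|\mathcal{N}(0,|c|^2)_\mathbb{C}-\mathcal{N}(0,1)_\mathbb{C}\|$. A clean route is to go through the KL divergence and use Pinsker's inequality, $\|\mathcal{D}_1-\mathcal{D}_2\|\le\sqrt{\tfrac12 D_{KL}(\mathcal{D}_1\|\mathcal{D}_2)}$, since the KL divergence between two centered (complex) Gaussians has a closed form: for real variance-$\sigma^2$ versus variance-$1$ Gaussians it is $\tfrac12(\sigma^2-1-\ln\sigma^2)$, and the complex case is the same up to a factor of $2$. For $\sigma^2=|c|^2$ with $c$ near $1$, a Taylor expansion gives $D_{KL}=\Theta((|c|^2-1)^2)=\Theta((c-1)^2)$ (using that $c$ is real and $|c-1|$ is small). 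Hence the single-coordinate total variation distance is $O(|c-1|)$.

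Next I would assemble the $nk$ coordinates using subadditivity of total variation distance over product distributions: $\|\mathcal{D}_1^{\otimes N}-\mathcal{D}_2^{\otimes N}\|\le N\|\mathcal{D}_1-\mathcal{D}_2\|$. This yields $\|\mathcal{N}_A[c]-\mathcal{N}_A[1]\|\le nk\cdot O(|c-1|)$. Actually, to land at $O(\delta)$ rather than $O(nk\,\delta/\sqrt{nk})=O(\sqrt{nk}\,\delta)$, I need the sharper bound that the combined KL divergence is $nk\cdot\Theta((c-1)^2)$ and then apply Pinsker once at the end: $\|\mathcal{N}_A[c]-\mathcal{N}_A[1]\|\le\sqrt{\tfrac12\,nk\cdot\Theta((c-1)^2)}=\Theta(\sqrt{nk}\,|c-1|)$. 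Plugging in the hypothesis $|c-1|\le\delta/\sqrt{nk}$ gives the claimed $O(\delta)$.

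The only mild obstacle is bookkeeping: making sure the complex-Gaussian KL formula is used with the right constant, confirming that the Taylor remainder in $D_{KL}=\Theta((c-1)^2)$ is uniformly controlled for $c$ in a neighborhood of $1$ (which is fine, since $\delta/\sqrt{nk}\le 1$ keeps $c$ bounded away from $0$), and remembering to apply Pinsker only after summing the per-coordinate KL contributions so that the $\sqrt{nk}$ comes out with the correct power. No step is genuinely hard; the lemma is essentially a tensorization-of-total-variation computation dressed up for Gaussians.
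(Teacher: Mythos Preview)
Your proposal is correct and follows essentially the same route as the paper: identify the product structure, use additivity of KL divergence across the $nk$ rescaled coordinates, Taylor-expand the Gaussian KL formula to get $D_{KL}=\Theta(nk\,(c-1)^2)$, and then apply Pinsker's inequality once at the end to obtain $\|\mathcal{N}_A[c]-\mathcal{N}_A[1]\|=O(\sqrt{nk}\,|c-1|)=O(\delta)$. Your self-correction---summing the KL contributions first and only then invoking Pinsker, rather than bounding each coordinate in TV and using subadditivity---is exactly the point, and is precisely what the paper does (it quotes the multivariate Gaussian KL formula directly, which amounts to the same thing).
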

\begin{proof}
First notice that, since $\mathcal{N}_{A}[1]$ is just the joint distribution of $n(n+k)$ i.i.d.\ complex Gaussians $x_i \sim \mathcal{N}(0,1)_\mathbb{C}$, then $\mathcal{N}_{A}[c]$ is the joint distribution over $n(n+k)$ i.i.d.\ complex Gaussians $y_i \sim \mathcal{N}(0,\sigma_i)_\mathbb{C}$, where $\sigma_i$ is 1 for $n^2$ of the variables (those corresponding to the leftmost $n \times n$ submatrix of $A$) and $c$ for the remaining $nk$ variables. \ By Pinsker's inequality, we have
\begin{equation*}
||\mathcal{N}_{A}[c]-\mathcal{N}_{A}[1]|| \leq \sqrt{\frac{1}{2} D_{KL}(\mathcal{N}_{A}[1]||\mathcal{N}_{A}[c])},
\end{equation*}
where $D_{KL}(P||Q)$ is the Kullback-Leibler divergence. \ When $P$ and $Q$ are multivariate Gaussian distributions, there is a closed form for $D_{KL}(P||Q)$---in particular, if $P$ (respectively $Q$) corresponds to the distribution over $K$ i.i.d.\ complex variables $\{x_i\}$ with means 0 and corresponding variances $\{\sigma_{P,i}\}$ (respectively $\{\sigma_{Q,i}\}$), we can write \cite{Kullbackbook}:
\begin{equation*}
D_{KL}(P||Q) = \sum_i{\left(\frac{\sigma_{P,i}}{\sigma_{Q,i}}\right)^2} - K + 2 \sum_i \ln{\frac{\sigma_{Q,i}}{\sigma_{P,i}}}.
\end{equation*}

By setting $K$ equal to $n(n+k)$, $\sigma_{Q,i}$ equal to 1 for all $i$, and $\sigma_{P,i}$ equal to 1 for $n^2$ of the $i$'s and $c$ for the other $kn$ we get
\begin{equation*}
D_{KL}(\mathcal{N}_{A}[1]||\mathcal{N}_{A}[c])= n k \left(\frac{1}{c^2}-1+2 \ln c\right).
\end{equation*}
If $|1-c|=a$, we get
\begin{equation*}
D_{KL}(\mathcal{N}_{A}[1]||\mathcal{N}_{A}[c]) = 2 n k a^2 + O(a^3)
\end{equation*}
and hence
\begin{equation*}
||\mathcal{N}_{A}[c]-\mathcal{N}_{A}[1]|| \leq a \sqrt{n k}+O(a^2).
\end{equation*}
Setting $a = \frac{\delta}{\sqrt{n k}}$ completes the proof.
\end{proof}

By \lem{cbound}, to satisfy the definition of $\mathcal{O}$ we must have $|1-c| = O(\delta/\sqrt{kn})$. \ One might worry about how stable the estimate of $|\text{Per}(X)|^2$ produced by the polynomial interpolation will be, if we are only allowed to probe a very small region of values of $c$. \ We now show how to relate the precision in the output of $\mathcal{O}$ to the resulting estimate of $|\text{Per}(X)|^2$.

To begin, consider the equivalent problem of estimating the parameters $\{\beta_i\}$ of the polynomial
\begin{equation}
w = \beta_0 + \beta_1 x + \beta_2 x^2 + \ldots \beta_k x^k
\end{equation}
by choosing $k+1$ distinct values $x_i$ in the interval $[1-a,1+a]$ and estimating the corresponding values $w_i$ each with error $e_i < \epsilon' n!$. \ This results in the following linear system (written in vector notation):
\begin{equation}
\mathbf{w} = X \mathbf{\beta} + \mathbf{e},
\end{equation}
where $X$ is the Vandermonde matrix
\begin{equation} \label{NonlocalCore}
X =
\left(\begin{array}{ccccc}
1 & x_1 & x_1^2 & \ldots & x_1^k \\
1 & x_2 & x_2^2 & \ldots & x_2^k \\
1 & x_3 & x_3^2 & \ldots & x_3^k \\
\vdots & \vdots & \vdots & \ddots & \vdots \\
1 & x_{k+1} & x_{k+1}^2 & \ldots & x_{k+1}^k \\
\end{array}\right).
\end{equation}
Note that $X$ is invertible as long as we choose all $x_i$'s distinct.

We can now use, as an estimator for $\mathbf{\beta}$, the one given by the ordinary least squares method, namely
\begin{equation} \label{eq:leastsquares}
\mathbf{\hat{\beta}} = \left ( X^T X \right ) ^{-1} X^T \mathbf{w}.
\end{equation}
Let us now bound the variance on the estimator $\hat{\beta}_0$, which corresponds to the parameter that we are trying to estimate (i.e.\ $|\text{Per}(X)|^2$ in \eq{phidef}). \ Since the errors $e_i$ are all in the interval $[ -\epsilon' n!,\epsilon' n!]$ by assumption we can, without loss of generality, assume that  $\mathrm{E}(e_i) = 0$ and Var$(e_i) \leq (\epsilon' n!)^2$. \ Let us also denote by $\Omega$ the covariance matrix of the error vector $\mathbf{e}$. \ Then we can write:
\begin{equation}
\textrm{Var}  (\mathbf{\hat{\beta}})  = \textrm{Var} \left [ \left (X^T X \right )^{-1} X^T \mathbf{w} \right ] = X^{-1} \Omega \left ( X^{-1} \right )^{T},
\end{equation}
and thus
\begin{equation} \label{eq:varbeta1}
\textrm{Var}(\hat \beta_0) = \left [X^{-1} \Omega \left (X^{-1} \right )^{T} \right ]_{1,1}.
\end{equation}

But now we can bound this as:
\begin{align}
\left [ X^{-1} \Omega \left  ( X^{-1} \right )^{T} \right ]_{1,1} & \leq \max_{i,j} \left [ X^{-1} \Omega \left ( X^{-1} \right )^{T} \right ]_{i,j} \notag \\
& \leq ||X^{-1} \Omega (X^{-1})^{T}||_{\infty} \notag \\
& \leq ||X^{-1}||_{\infty} ||\Omega||_{\infty} ||(X^{-1})^T||_{\infty} \notag \\
& = ||X^{-1}||_{\infty} ||\Omega||_{\infty} ||(X^{-1})||_{1} \notag \\
& \leq (k+1) ||X^{-1}||_{\infty}^2 ||\Omega||_{\infty}, \label{eq:normbound1}
\end{align}
where $||A||_{\infty}$ and $||A||_1$ are the maximum row 1-norm and maximum column 1-norm, respectively, and we used the inequalities $||A B||_{\infty} \leq ||A ||_{\infty} || B ||_{\infty}$ and $||A||_1 \leq (k+1) ||A ||_{\infty}$, which hold for all $\{A,B\} \in M_{k+1}$ \cite{matrixanalysis}. \ We can now use a result due to Gautschi \cite{Gautschi1962}, which bounds the norm of the inverse of the square Vandermonde matrix as
\begin{equation*}
||X^{-1}||_{\infty} \leq \max_{1 \leq j \leq k+1} \prod_{i=1, i \neq j}^{k+1} \frac{1+|x_j|}{|x_j - x_i|}.
\end{equation*}

Since, in our case, all $x_i's$ are bounded in the interval $[1-a,1+a]$, we can write
\begin{equation} \label{eq:boundVander}
||X^{-1}||_{\infty} \leq (2+a)^k \max_{1 \leq j \leq k+1} \prod_{i=1, i \neq j}^{k+1} \frac{1}{|x_j - x_i|}.
\end{equation}
In order to obtain the sufficiently tight bound from the expression above, it is helpful to choose the $x_i$'s evenly spaced in the interval $[1-a,1+a]$. \ In that case, the maximum in \eq{boundVander} is obtained by choosing $x_j$ to be one of the central points, i.e.\ $j = k/2+1$ if $k$ is even or $j = (k+1)/2$ is $k$ is odd. \ In this case, it is not hard to show that
\begin{equation} \label{eq:normbound2}
||X^{-1}||_{\infty} \leq
  \begin{cases}
      \hfill \left(\frac{2 e}{a}\right)^k \frac{1}{\pi k}   \hfill & \text{ if $k$ is even} \\
      \hfill \left(\frac{2 e}{a}\right)^k \frac{1}{\pi \sqrt{k^2-1}}\frac{k^k}{\sqrt{(k-1)^{k-1}(k+1)^{k+1}}} \hfill & \text{ if $k$ is odd.} \\
  \end{cases}
\end{equation}
Clearly if $k$ is sufficiently large, these bounds coincide. \ Finally note that, since every $e_i$ has variance at most $(\epsilon' n!)^2$, we can write
\begin{equation} \label{eq:normbound3}
||\Omega||_{\infty} \leq (k+1) (\epsilon' n!)^2,
\end{equation}
By plugging Eqs.\ (\ref{eq:normbound1}), (\ref{eq:normbound2}) and (\ref{eq:normbound3}) back into \eq{varbeta1} we obtain
\begin{equation} \label{eq:varbeta2}
\textrm{Var}(\hat \beta_0) = O \left ( \frac{(\epsilon' n!)^2}{a^{2k}} \right ).
\end{equation}

We are now in position to prove \thm{reduction}.
\begin{proof}
We begin by writing
\begin{equation*}
\Phi(A[c]) := \frac{1}{|\Lambda_{n+k,n}|} |\text{Per}(X)|^2 + |c|^2 Q_1 + |c|^4 Q_2 + .... + |c|^{2k} Q_k.
\end{equation*}
We then choose $k+1$ values of $c$ equally spaced in the interval $\left [1-\delta/\sqrt{n k},1+\delta/\sqrt{n k} \right ]$. \ By \lem{cbound}, each $A[c]$ obtained in this way is within $\delta$ of total variation distance to $\mathcal{N}(0,1)^{n \times (n+k)}_\mathbb{C}$, so we can use the oracle $\mathcal{O}$ to obtain an estimate of $\Phi(A[c])$ to within additive error $\pm \epsilon' n!$. \ Using the ordinary least squares method, we can then give an estimate $\hat P$ for $\frac{1}{|\Lambda_{n+k,n}|}|\text{Per}(X)|^2$ with variance given by
\begin{equation*}
\textrm{Var}(\hat P) = O\left(\frac{(\epsilon' n!)^2}{(\delta^2/{n k})^{k}}\right),
\end{equation*}
which is just \eq{varbeta2} where we set $a=O(\delta/\sqrt{n k})$. \ Finally, by Chebyshev's inequality, we have
\begin{equation*}
\textrm{Pr}\left(|\hat P - \frac{1}{|\Lambda_{n+k,n}|}|\text{Per}(X)|^2| \geq K \sqrt{\textrm{Var}(\hat P)}\right) \leq 1/K^2.
\end{equation*}
For any $K>0$. \ Thus, by setting $K = 1/\sqrt{\delta}$ and $\epsilon' = O\left(\frac{\epsilon \delta^{k+1/2} k^{k/2}}{n^{k/2}(n+k)^{k}}\right)$, we obtain an estimate for $|$Per$(X)|^2$ to within additive error $\pm \epsilon \cdot n!$ with probability at least $1-\delta$ (over $X$ and over the errors in the outputs of $\mathcal{O}$) in poly($n,1/\epsilon,1/\delta$) time, thus solving $|$GPE$|^2_\pm$.
\end{proof}

\thm{reduction} then guarantees, for example, that if $k$ is constant, then as long as the error in each estimate of $\mathcal{O}$ is $\ll 1/n^{3k/2}$, solving $\Sigma |$GPE$|^2_\pm$ is as least as hard as solving $|$GPE$|^2_\pm$.

\section{Summary and Open Problems}
\label{SUMMARY}

We investigated the loss tolerance of \textsc{BosonSampling} in the regime where only a few photons are lost. \ In this case, the output distribution can become very far (in total variation distance) from the ideal one, and so the original argument of Aaronson and Arkhipov \cite{Aaronson2013a} does not work. \ Nevertheless, we showed how the problem of estimating the permanent, which usually describes the outcome probabilities in the ideal \textsc{BosonSampling} model, can be reduced to the problem of estimating the quantity $\Phi$, defined in \eq{phi}, which describes the probabilities in the lossy \textsc{BosonSampling} model. \ For the regime where the number of lost photons is constant (or, alternatively, where each photon is lost with probability $\sim c/n$), this allows us to replace $|\text{Per}(X)|^2$ by $\Phi(A)$ in Aaronson and Arkhipov's argument, and thus to obtain similar complexity results as those in \cite{Aaronson2013a}. \ Note that it was not necessary to modify either the Permanent-of-Gaussians Conjecture or the Permanent Anti-Concentration Conjecture that Aaronson and Arkhipov \cite{Aaronson2013a} used.

Note that, if we restrict our attention to the loss model where each photon has an independent probability $p$ of being lost (cf.\ the discussion after \sec{probdef}), then there is a much simpler hardness argument that works for $k=O(1)$. \ Namely, in this regime the probability that {\em zero} photons will be lost is only polynomially small, so one could simply repeat the experiment a polynomial number of times, and postselect on observing all of the photons. \ As a related observation, if the scattering matrix $U$ can be chosen arbitrarily (rather than from the Haar measure), then even if exactly $k$ photons will be lost in each run, we could simply pick $U$ to act as the identity on $k$ of the $n$ modes that initially have $1$ photon each. \ If no photons are observed in those $k$ modes---which occurs with probability $\Theta(1/n^k)$---then we know that lossless \textsc{BosonSampling} has been performed on the remaining $n-k$ photons.

In our view, these observations underscore that our hardness result is only a first step, and that ultimately one wants a strong hardness result that holds even when $k$ is non-constant. \ On the other hand, it is important to understand that our result is not itself obtainable by any simple postselection trick like those described above. \ Rather, our result requires analyzing output probabilities that are sums of absolute squares of many permanents, and showing that these sums inherit some of the hardness of the permanent itself. \ By demonstrating that such reductions are possible, we hope our result will inspire further work on the complexity of lossy \textsc{BosonSampling}. \ We also point out that the reasoning behind our proof of \thm{reduction} can be used to deal with other issues besides just photon losses, such as e.g.\ the case of photon ``shuffling'' described in the appendix, where photon losses compounded with dark counts acts as noise in the probability distribution that cannot be dealt with just by postselection.

Our work leaves a few open questions. \ The obvious one is to investigate what happens in more realistic loss regimes, such as e.g.\ if a constant fraction of the photons is lost (i.e.\ in our notation, $k = \epsilon n$). In that case, \thm{reduction} requires the oracle $\mathcal{O}$ to produce an estimate of $\Phi$ to within error $1/n^{\Theta(\epsilon n)}$, so it does not allow us to make any strong complexity claims. \ It would be reasonable to expect that, if $k$ is too large, \textsc{BosonSampling} becomes classically simulable---in fact, this is easily seen to be true if all but $O(\log n)$ photons get lost, and it would be interesting if we could determine where the transition happens.

In that direction, an even more basic question that one can ask is whether the ``temperament,'' $\Phi(A)$, of an i.i.d.\ Gaussian matrix $A \sim \mathcal{N}(0,1)^{n \times (n+k)}_\mathbb{C}$, becomes concentrated around some simple classical estimator when $k$ is large compared to $n$. \ For example, one could look at $R(A)$, the product of the squared $2$-norms of the rows of $A$. \ Aaronson and Arkhipov \cite{Aaronson2014} previously studied this estimator in the case of {\em no} lost photons (i.e., $k=0$), and showed that even there, $R(A)$ has a constant amount of correlation with $|\text{Per}(A)|^2$. \ If (say) $n=2$, then it is not hard to see that $R(A)$ has an excellent correlation with $\Phi(A)$, as $k$ gets arbitrarily large. \ What happens when $n\sim \log^{2} k$ or $n\sim \sqrt{k}$?

A second open problem concerns our choice to model photon losses assuming all of them were lost at the input, as described at the beginning of \sec{probdef}. \ We leave as a question for future work to adapt our proof to the case where losses happen at the detectors. \ In that case, rather than summing over all possibilities that $k$ out of $k+n$ input photons were lost, we would have to sum over all possibilities of a subset of $k$ out of the $m-n$ detectors malfunctioning. \ When we do that, we obtain a probability expression that is very similar to that in \eq{phi}, but with three complications: (i) the permanents are of $(n+k) \times (n+k)$ matrices, since now $n+k$ photons actually traveled through the network, (ii) the sum is over $\binom{m-n}{k}$ possibilities, rather than $\binom{n+k}{n}$, and (iii) rather than considering $A$ to be a $n \times (n+k)$ matrix of i.i.d.\ Gaussian entries, we need to consider it an $m \times (n+k)$ random matrix of orthonormal columns. \ Nevertheless, in the regime $m = \Theta(n^2)$ that is often assumed in \textsc{BosonSampling}, it should be possible (although somewhat cumbersome) to adapt our polynomial interpolation argument to this case. \ Then, as a next step, one could consider the case where the photons can be randomly lost both at the sources {\em and} at the detectors, or indeed anywhere in the network.

\section*{Acknowledgments}
We thank Aram Harrow and Anthony Leverrier for helpful discussions.

\bibliographystyle{ieeetr}

\appendix
\section{Appendix: Other Applications of Main Result}

In this appendix, we point out how our main result in \sec{reductheorem} can be repurposed to include other scenarios besides photon losses, such as dark counts. \ In the same way that our main result (\thm{reduction}) only extends the loss-tolerance of \textsc{BosonSampling} to allow for a constant number $k$ of photons to be lost, the other scenarios described in this Appendix also have some parameter $k$ (e.g.\ the number of dark counts) that can be at most a constant.

Recall that our main result concerns the setting where $n+k$ photons were prepared, but only $n$ of them were measured, and the corresponding probability is given by replacing the permanent with $\Phi(A)$ defined in \eq{phi}, which is a function of $n \times (n+k)$ matrices. \ We then embed the $n \times n$ matrix $X$ whose permanent we wish to estimate in a matrix $A[c]$ defined as

\begin{equation}
A[c] = \left ( X \; c Y \right ),
\end{equation}
for some $k \times k$ Gaussian $Y$. \ We then write, as in \eq{phidef},
\begin{equation}
\Phi(A[c]) := \frac{1}{|\Lambda|} (|\text{Per}(X)|^2 + |c|^2 Q_1 + |c|^4 Q_2 + .... + |c|^{2k} Q_k),
\end{equation}
where each $Q_i$ is given in terms of permanents of different submatrices of $A[c]$, other than $X$, and their values are irrelevant for our purposes. \ Finally, we can use a polynomial regression to produce an estimate for $|\text{Per}(X)|^2$ from estimates of $\Phi(A[c])$ for several values of $c$ sufficiently close to 1.

Now, let us show how other scenarios can be cast in the same form.

\subsection{Dark Counts}

Suppose that, rather than preparing $n+k$ photons and measuring $n$ of them, we prepare $n$ and measure $n+k$ photons---that is, we have events known as dark counts, when an optical detector ``clicks'' even in the absence of a photon. \ This means that we must now ascribe, to our detected event, a probability that is the average of the probabilities obtained by considering that each possible subset $k$ of the $k+n$ detectors were the ones that misfired. \ In other words, we can write the probability as

\begin{equation}
\Phi_{\text{dark}}(A) :=\frac{1}{|\Lambda|} \Sigma_{T \in \Lambda} |\text{Per}(A_T)|^2
\end{equation}
where again $\Lambda$ denotes the set of all possible no-collision states of $n$ photon in $n+k$ modes, but $A$ is now a $(n+k) \times n$ matrix, and $A_T$ is the submatrix of $A$ obtained by taking $n$ of its rows corresponding to the 1's in $T$. \ We once more embed the $n \times n$ matrix $X$ in a larger matrix, but which now is defined as
\begin{equation}
A[c] = \left( \begin{array}{c}
X \\ cY
 \end{array} \right).
\end{equation}
Note that this leads to an identical expression as \eq{phidef},
\begin{equation}
\Phi_{\text{dark}}(A[c]) := =\frac{1}{|\Lambda|}(|\text{Per}(X)|^2 + |c|^2 Q_1 + |c|^4 Q_2 + .... + |c|^{2k} Q_k).
\end{equation}
As before, the $Q_i$'s correspond to permanents of different submatrices of $A[c]$, which are now distributed vertically rather than horizontally. \ It is clear from the form of $\Phi_{\text{dark}}(A[c])$ that the arguments of \sec{reductheorem} follow through unchanged.

\subsection{Combination of Losses and Dark Counts}

Consider now the following scenario: we prepared $n+k$ photons and measured $n+k$ photons, but during the process $k$ of our photons were lost, and there were $k$ dark counts (let us call this event of losing a photon and getting a dark count as ``shuffling'' a photon). \ One difference compared to the previous scenarios is that a shuffling event is not {\em heralded} in any way---in principle, any experiment which performs $n$-photon \textsc{BosonSampling} should take all probabilities of up to $n$-fold shuffling into account as noise in the final distribution. \ As we now argue, in a situation where up to $k$ out of $n+k$ photons can be shuffled, for constant $k$, the reasoning behind our hardness result goes through.

First, suppose that, for whatever reason, we know that exactly $k$ photons were shuffled during the process. \ In this case, the probability we ascribe to the observed event is given by
\begin{equation}
\Phi_{\text{shuf}}(A,k) :=\frac{1}{|\Lambda|^2} \Sigma_{S, T \in \Lambda} |\text{Per}(A_{ST})|^2,
\end{equation}
where now $S$ and $T$ both run over all possible collision-free basis states corresponding to $n$ photons in $k+n$ modes and index rows and columns, respectively, of $A$. \ In other words, again the sum is over all $n \times n$ proper submatrices of $A$, which now is $(n+k) \times (n+k)$. \ As before, we can build a matrix of the form
\begin{equation}
A[c] = \left( \begin{array}{cc}
X & c Y\\
c V & c^2 W \end{array} \right),
\end{equation}
and write
\begin{equation} \label{eq:shufflek}
\Phi_{\text{shuf}}(A[c],k)=\frac{1}{|\Lambda|^2} (|\text{Per}(X)|^2 + |c|^2 Q_1 + |c|^4 Q_2 + .... + |c|^{4k} Q_2k).
\end{equation}
In this case, each $Q_j$ is a sum that collects all submatrices of $A[c]$ which use $j$ rows and/or columns of $A$ that contain the factor $c$, and the total polynomial is of degree $4k$, rather than $2k$ as in previous cases. \ Nonetheless, it is straightforward to adapt the least squares method to this setting, and our main argument goes through with no significant changes.

However, since shufflings are not heralded, restricting ourselves to the case where exactly $k$ photons are shuffled is not very natural. \ To make it (slightly) more natural, we can consider the case where all $j$-fold shufflings can happen, from $j$ ranging from 1 to $k$, with corresponding known probabilities $p_j$. \ In this case, the probability we assign to the event is
\begin{equation} \label{eq:shuffle}
\Phi_{\text{shuf}}(A[c])=\sum_{j=0}^{k} p_j \Phi_{\text{shuf}}(A[c],j).
\end{equation}
But note that, in each $\Phi_{\text{shuf}}(A[c],j)$ for $j$ between 0 and $k-1$, the permanents in \eq{shufflek} are over submatrices of dimensions $(n+k-j) \times (n+k-j)$, and thus each of them must include at least one column or row that contains a factor $c$. \ This means that, in the sum in \eq{shuffle}, the only term that does not have a factor of $c$ is the one containing $|\text{Per}(X)|^2$. \ Thus we can write once again
\begin{equation}
\Phi_{\text{shuf}}(A[c],k)=\frac{1}{|\Lambda|^2} (p_k |\text{Per}(X)|^2 + |c|^2 Q'_1 + |c|^4 Q'_2 + .... + |c|^{4k} Q'_2k).
\end{equation}
where now all $Q'_i$s may depend in complicated but unimportant ways on all $j$-fold shuffling events. \ We can again apply the least-squares method, but now there is an important caveat: rather than obtaining an estimate of $|\text{Per}(X)|^2$ to within additive precision, we only obtain an estimate of $p_k |\text{Per}(X)|^2$. \ This introduces the additional restriction that $p_k$ is large enough that the additive approximation does not become useless.
	
\end{document}